\newcommand{\R}{\mathbb{R}}
\newcommand{\V}{\mathcal{V}}
\newcommand{\C}{\mathcal{C}}
\newcommand{\E}{\mathcal{E}}
\newcommand{\G}{\mathcal{G}}
\newcommand{\T}{\mathcal{T}}
\newcommand{\e}{\mbox{e}}
\newtheorem{theorem}{Theorem}
\newtheorem{corollary}{Corollary}
\newcommand{\removed}[1]{\cbstart\removedfragile{#1}\cbend{}}
\newcommand{\removedfragile}[1]{{\color{red}{\sout{#1}}}{}}
  \newcommand{\removed}[1]{} 
  \newcommand{\removedfragile}[1]{}
\begin{document}

\title{Steerable Discrete Fourier Transform}

\author{Giulia Fracastoro, Enrico Magli\footnote{The authors are with the Department
of Electronics and Telecommunications, Politecnico di Torino, Torino, Italy (e-mail: name.surname@polito.it).
This work has been supported by Sisvel Technology.}}
        

\maketitle

\begin{abstract}
Directional transforms have recently raised a lot of interest thanks to their numerous applications in signal compression and analysis. In this letter, we introduce a generalization of the discrete Fourier transform, called steerable DFT (SDFT). Since the DFT is used in numerous fields, it may be of interest in a wide range of applications. Moreover, we also show that the SDFT is highly related to other well-known transforms, such as the Fourier sine and cosine transforms and the Hilbert transforms. 
\end{abstract}



\section{Introduction}

In the last few years, several authors have proposed using directional transforms for various signal and image processing tasks. Examples include the directional \cite{zeng2008directional} and steerable \cite{fracastoro2015steerable} discrete cosine transform, the rotational transform \cite{alshina2011rotational}, as well as other transforms employing sophisticated nonseparable geometries, e.g. curvelets \cite{candes2000curvelets}, bandlets \cite{le2005sparse}, contourlets \cite{do2005contourlet}, and so on. Such transforms are appealing in many applications, including signal analysis and compression, because the adaptation of geometric parameters can optimally match the transform to the signal of interest. 

Along the same lines, the discrete Fourier transform (DFT) is one of the most important tools in digital signal processing. It enables us to analyze, manipulate, and synthesize signals and it is now used in almost every field of engineering \cite{lyons2010understanding}. In the past, some generalizations of the Fourier transform have been presented, such as the short time Fourier transform \cite{allen1977unified} and the fractional Fourier transform (also called angular Fourier transform) \cite{almeida1994fractional} \cite{almeida1993introduction}. The short time Fourier transform subdivides the signal into narrow time intervals in order to obtain simultaneous information on time and frequency. Instead, the fractional Fourier transform, and its discrete version called discrete rotational Fourier transform \cite{santhanam1996discrete}, can be interpreted as a rotation on the time-frequency plane. Recently, the concept of a graph Fourier transform (GFT) has been introduced in \cite{hammond2011wavelets}; this new transform generalizes the traditional Fourier analysis to the graph domain.

In \cite{fracastoro2015steerable}, the theory of graph signal processing \cite{shuman2013emerging}, and particularly the relationship between the graph Fourier transform and grid graphs, has been exploited to define a new directional 2D-DCT \cite{strang1999discrete} that can be steered in a chosen direction. 
In this letter, we extend this concept and present a new generalization of the DFT, called steerable discrete Fourier transform (SDFT). The proposed SDFT can be defined in one or two dimensions (unlike the steerable DCT which can be defined only in the 2D case). In 1D, we start from the definition of the GFT of a cycle graph and we obtain a new transform, the 1D-SDFT, by rotating the 1D-DFT basis. The 1D-SDFT can be interpreted as a rotation of the basis vectors on the complex plane. 
Instead, in the 2D case we use the GFT of a toroidal grid graph to introduce the new 2D-SDFT, which can be obtained by rotating the 2D-DFT basis. The 2D-SDFT represents a rotation on the two-dimensional Euclidean space.
Since the DFT is used in a wide range of applications, the SDFT represents an interesting generalization that could be applied in various fields, including e.g. filtering, signal analysis, or even multimedia encryption where parametrized versions of common transforms have been used for security purposes \cite{pande2013secure, unnikrishnan2000optical}. We also show that the SDFT is related to other well-known transforms, such as the Fourier sine and cosine transforms and the Hilbert transform.

\section{Basic definitions on graphs}
A graph can be denoted as $\G=(\V,\E)$, where $\V$ is the set of vertices (or nodes) with $|\V|=N$ and $\E\subset \V\times \V$ is the set of edges. It is possible to represent a graph by its adjacency matrix $A(\G)\in \mathbb{R}^{N\times N}$, where $A(\G)_{ij}=1$ if there is an edge between node $i$ and $j$, otherwise $A(\G)_{ij}=0$. The graph Laplacian is defined as $L(\G)=D(\G)-A(\G)$, where $D(\G)$ is a diagonal matrix whose $i$-th diagonal element $D(\G)_i$ is equal to the number of edges incident to node $i$. Since $L(\G)$ is a real symmetric matrix, it is diagonalizable by an orthogonal matrix $L(\G)=\Phi \Lambda\Phi^H$, where $\Phi\in\mathbb{R}^{N\times N}$ is the eigenvector matrix of $L(\G)$ that contains the eigenvectors as columns, $\Lambda$ is the diagonal eigenvalue matrix where the eigenvalues are sorted in increasing order and $H$ denotes the Hermitian transpose.

A graph signal $\mathbf{x}\in\mathbb{R}^N$ in the vertex domain is a real-valued function defined on the nodes of the graph $\G$ such that $\mathbf{x}_i$, where $i=1,...,N$, is the value of the signal at node $i\in\V$ \cite{shuman2013emerging}. The eigenvectors of $L(\G)$ are used to define the graph Fourier transform (GFT) \cite{shuman2013emerging} of the signal $\mathbf{x}$ as follows
\begin{equation}
\hat{\mathbf{x}}=\Phi^H \mathbf{x}.
\label{eq:gft}
\end{equation}
\section{SDFT - 1D case}
The forward one dimensional discrete Fourier transform (1D-DFT) of the signal $\mathbf{x}\in\mathbb{R}^N$ can be computed in the following way
\[
\hat{\mathbf{x}}_k=\sum_{n=0}^{N-1} \mathbf{x}_n \e^{-i\frac{2\pi kn}{N}}.
\] 
We can write it in matrix form $\hat{\mathbf{x}}=V\mathbf{x}$,
where $V_{kn}=\e^{-i\frac{2\pi kn}{N}}=\rho_k^n$. $V\in\mathbb{C}^{N\times N}$ is the 1D-DFT matrix and it has the following property.
\begin{theorem}[Theorem 5.1 \cite{grady2010discrete}] 
\label{theo}
The rows of the DFT matrix are eigenvectors of any circulant matrix.
\end{theorem}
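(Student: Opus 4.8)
The plan is to show directly that each row of the DFT matrix $V$ is an eigenvector of an arbitrary circulant matrix $C$. Recall that a circulant matrix $C \in \mathbb{C}^{N\times N}$ is completely determined by its first row $(c_0, c_1, \dots, c_{N-1})$, with entries $C_{jk} = c_{(k-j) \bmod N}$; equivalently, $C = \sum_{m=0}^{N-1} c_m P^m$, where $P$ is the cyclic shift (permutation) matrix. So the essential observation is that every circulant matrix is a polynomial in the single shift matrix $P$, and hence it suffices to prove the claim for $P$ itself: if $v$ is a row of $V$ with $Pv^{T}$ (or $v P$, depending on the convention) equal to a scalar multiple of $v^{T}$, then $C v^{T} = \sum_m c_m P^m v^{T}$ is the same scalar polynomial times $v^{T}$, so $v$ is an eigenvector of $C$ with eigenvalue $\sum_m c_m \rho_k^{m}$ (the DFT of the sequence $c_m$).

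First I would fix notation: let the $k$-th row of $V$ be $v^{(k)} = (1, \rho_k, \rho_k^2, \dots, \rho_k^{N-1})$ where $\rho_k = \e^{-i 2\pi k/N}$, using the paper's notation $V_{kn} = \rho_k^n$. Next I would compute the action of a circulant $C$ on the column vector $(v^{(k)})^{T}$ componentwise: the $j$-th entry of $C (v^{(k)})^{T}$ is $\sum_{n=0}^{N-1} c_{(n-j)\bmod N}\, \rho_k^{n}$. Then I would substitute $m = (n-j) \bmod N$ and use the key periodicity fact $\rho_k^{N} = \e^{-i 2\pi k} = 1$, so that $\rho_k^{n} = \rho_k^{j}\rho_k^{m}$ regardless of the modular reduction. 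This factors the sum as $\rho_k^{j} \sum_{m=0}^{N-1} c_m \rho_k^{m}$, i.e. the $j$-th entry equals $\lambda_k \cdot \rho_k^{j}$ with $\lambda_k := \sum_{m=0}^{N-1} c_m \rho_k^{m}$ independent of $j$. Hence $C (v^{(k)})^{T} = \lambda_k (v^{(k)})^{T}$, which is exactly the eigenvector statement.

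The only mild subtlety — and the one place to be careful rather than a genuine obstacle — is bookkeeping with the modular index arithmetic and with which convention (row vs. column action, $C$ vs. $C^{T}$, the sign in the exponent) is in force, since the paper writes the DFT as $\hat{\mathbf{x}} = V\mathbf{x}$ with rows as the analysis vectors. I would therefore state explicitly whether I treat $v^{(k)}$ as a left eigenvector ($v^{(k)} C = \lambda_k v^{(k)}$) or pass to transposes; the computation above handles the right-eigenvector form of $C^{T}$, and since the transpose of a circulant is again circulant the conclusion is unaffected. A one-line remark that the map $\{c_m\} \mapsto \{\lambda_k\}$ is itself the DFT closes the loop and makes clear why this result is the natural starting point for the SDFT construction that follows.
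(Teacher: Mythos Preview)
Your argument is correct and is the standard proof of this classical fact: reduce to the shift matrix (or compute directly), change the summation index modulo $N$, and use $\rho_k^N=1$ to factor out the eigenvalue $\lambda_k=\sum_m c_m\rho_k^m$. The bookkeeping remark about left versus right eigenvectors is appropriate and the resolution you give (transpose of a circulant is circulant) is fine.

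As for comparison with the paper: there is nothing to compare. The paper does not prove this theorem; it is quoted verbatim as Theorem~5.1 of \cite{grady2010discrete} and used as a black box to justify that the 1D-DFT is a valid GFT for the cycle graph $\mathcal{C}_N$. So your proposal supplies a proof where the paper deliberately omits one, and any correct self-contained argument---yours included---would be acceptable here.
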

\begin{figure}
 \centering
   \includegraphics[width=2.4cm]{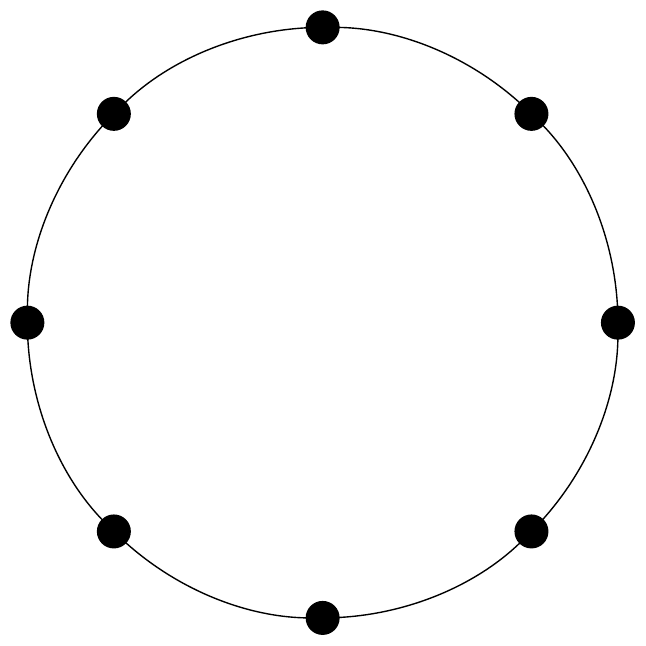}
   \hspace{0.5cm}
   \includegraphics[width=4.6cm]{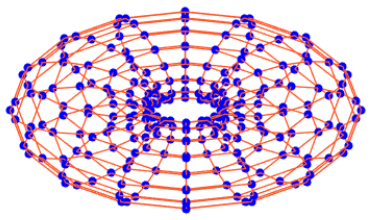}
 \caption{On the left: a cycle graph $\C_8$. On the right: A toroidal grid graph $\T_{16,16}$.}
 \label{fig:graph}
 \vspace{-0.5cm}
 \end{figure}
Let us now consider an undirected cycle graph $\C_N$ with $N$ vertices, whose structure is shown in Fig. \ref{fig:graph}. This type of graph is called a circulant graph because its adjacency matrix, and therefore its Laplacian matrix, is circulant. Circulant graphs are of great importance in graph signal processing, because they accommodate fundamental signal processing operations, such as linear shift-invariant filtering, downsampling, upsampling, and reconstruction \cite{ekambaram2013multiresolution,ekambaram2013circulant}. It is well known that a valid set of eigenvectors for any circulant matrix is the set of DFT matrix rows, then the 1D-DFT is a valid GFT for $\C_N$ (i.e. $\Phi^H=V$). However, repeated eigenvalues are present in the spectrum of $L(\C_N)$, because the following property holds 
\begin{equation}
\label{property}
\lambda_k=\lambda_{N-k},
\end{equation}
where $\lambda_k$ is the $k$-th eigenvalue of $L(\C_N)$ with $k=1,2,...,\frac{N}{2}-1$ \cite{tee2005eigenvectors}. The eigenvalues $\lambda_k$ can be computed in the following way \cite{tee2005eigenvectors}:
\begin{equation}
\label{circeig}
{\lambda_k=2-2\cos\frac{2\pi k}{N},}
\end{equation}
for $k=0,1,2,...,N-1$. In addition, $L(\C_N)$ has $N$ orthogonal eigenvectors $\{\mathbf{v}^{(k)}\}$, where $\mathbf{v}^{(k)}=\left[\rho_k, \rho_k^2,...,\rho_k^{n-1}\right]^T$ for $k=0,1,2,...,N-1$ \cite{tee2005eigenvectors}.

From \eqref{property} and \eqref{circeig}, we can state that, if $N$ is even, $\lambda_0$ and $\lambda_{\frac{N}{2}}$ have algebraic multiplicity 1, instead all the other eigenvalues have algebraic multiplicity 2 with $\lambda_k=\lambda_{N-k}$, where $1\le k \le \frac{N}{2}-1$. Since the eigenvectors are orthogonal, the geometric multiplicity is equal to the algebraic multiplicity. This means that the dimension of the eigenspaces corresponding to $\lambda_k$ where $1\le k \le \frac{N}{2}-1$ is 2, then the vector basis of the 1D-DFT is not the only possible eigenbasis of $L(\C_N)$.

We can then introduce the following corollary, whose proof follows from the discussion above and is omitted for brevity.
\begin{corollary} The graph Fourier transform of a cycle graph $\C_N$ may be equal to the 1D-DFT, but it is not the only possible graph Fourier transform of a cycle graph.
\end{corollary}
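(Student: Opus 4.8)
The plan is to verify the two halves of the statement in turn. For the first half --- that the GFT of $\C_N$ \emph{may} equal the 1D-DFT --- I would observe that the Laplacian $L(\C_N)$ is a circulant matrix, so Theorem~\ref{theo} applies verbatim: the rows of the DFT matrix $V$ are eigenvectors of $L(\C_N)$. Since these rows are mutually orthogonal (and become orthonormal after the usual $1/\sqrt{N}$ scaling), the choice $\Phi^H=V$ diagonalizes $L(\C_N)$, and hence the 1D-DFT is a legitimate GFT of $\C_N$ in the sense of \eqref{eq:gft}.

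For the second half --- non-uniqueness --- I would lean entirely on the spectral degeneracy already recorded in \eqref{property} and \eqref{circeig}. Provided $N\ge 3$, there is at least one index $k$ with $1\le k\le \frac N2-1$ for which $\lambda_k=\lambda_{N-k}$, and \eqref{circeig} shows this common eigenvalue has algebraic multiplicity exactly $2$ (the cosine being injective on the remaining indices); by symmetry of $L$ the geometric multiplicity is also $2$. Inside the corresponding eigenspace $\mathrm{span}\{\mathbf{v}^{(k)},\mathbf{v}^{(N-k)}\}$ I would replace the DFT pair by a rotated pair, for instance
\[
\mathbf{w}^{(k)}=\cos\theta\,\mathbf{v}^{(k)}+\sin\theta\,\mathbf{v}^{(N-k)},\qquad \mathbf{w}^{(N-k)}=-\sin\theta\,\mathbf{v}^{(k)}+\cos\theta\,\mathbf{v}^{(N-k)},
\]
with $\theta$ not an integer multiple of $\pi/2$. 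A short computation shows that $\mathbf{w}^{(k)}$ and $\mathbf{w}^{(N-k)}$ remain orthogonal and are still eigenvectors of $L(\C_N)$ for the same eigenvalue, so the matrix $\Phi'$ obtained by substituting these two columns into $\Phi$ is again an orthogonal eigenvector matrix of $L(\C_N)$; yet $\Phi'$ differs from $\Phi$ even after allowing for column reordering and unit-modulus rescaling. Hence $(\Phi')^H$ is a GFT of $\C_N$ distinct from the 1D-DFT, which is exactly the claim.

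I do not expect a genuine obstacle: everything needed is already contained in the degeneracy analysis preceding the corollary. The only points deserving a line of care are that the hypothesis $N\ge 3$ is what guarantees the degenerate two-dimensional eigenspace exists, and that an arbitrary orthonormal rotation within an eigenspace automatically produces eigenvectors --- immediate, since every vector of an eigenspace is an eigenvector. This is presumably why the authors felt safe omitting the argument.
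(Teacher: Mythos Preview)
Your proposal is correct and matches the paper's approach: the authors explicitly say the corollary ``follows from the discussion above,'' and that discussion is precisely the two-part argument you give --- Theorem~\ref{theo} plus the circulant structure of $L(\C_N)$ for the first half, and the multiplicity-two eigenspaces from \eqref{property}--\eqref{circeig} for the second. Your explicit rotated pair in fact anticipates the paper's equation~\eqref{eq:rot}, which is introduced immediately after the corollary to define the SDFT; your added care about $N\ge 3$ and about $\theta\notin\frac{\pi}{2}\mathbb{Z}$ (to rule out mere permutation/rescaling) are refinements the paper leaves implicit.
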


We now proceed to define the 1D-SDFT. Given an eigenvalue $\lambda_k$ of $L(\C_N)$ with multiplicity 2 and the two corresponding 1D-DFT vectors $\mathbf{v}^{(k)}$ and $\mathbf{v}^{(N-k)}$, we can define any other possible basis of the eigenspace corresponding to $\lambda_k$ as the result of a rotation of $\mathbf{v}^{(k)}$ and $\mathbf{v}^{(N-k)}$
\begin{equation}
\label{eq:rot}
\begin{bmatrix}
\mathbf{v}^{(k)'}\\
\mathbf{v}^{(N-k)'}
\end{bmatrix}
=
\begin{bmatrix}
\cos\theta_{k} & \sin\theta_{k}\\
-\sin\theta_{k} & \cos\theta_{k}
\end{bmatrix}
\begin{bmatrix}
\mathbf{v}^{(k)}\\
\mathbf{v}^{(N-k)}
\end{bmatrix},
\end{equation}
where $\theta_k$ is an angle in $[0,2\pi]$. 

For every $\lambda_k$ where $1\le k \le \frac{N}{2}-1$, we can rotate the corresponding eigenvectors as shown in \eqref{eq:rot}. In the 1D-DFT matrix, the pairs $\mathbf{v}^{(k)}$ and $\mathbf{v}^{(N-k)}$ are replaced with the rotated ones $\mathbf{v}'^{(k)}$ and $\mathbf{v}'^{(N-k)}$ obtaining a new transform matrix $V(\theta)\in\mathbb{C}^{N\times N}$ called 1D-SDFT. The vector $\theta\in\R^p$ contains all the rotation angles used and its length is $p=\frac{N}{2}-1$. The new transform matrix $V(\theta)$ can be written as 
\begin{equation}
\label{sdft-1d}
V(\theta)=R(\theta)V,
\end{equation}
where $V=V(0)\in\mathbb{C}^{N\times N}$ is the 1D-DFT matrix and $R(\theta)\in\R^{N\times N}$ is the rotation matrix, whose structure is defined so that, for each pair of eigenvectors, it performs the rotation as defined in \eqref{eq:rot}. It is important to underline that the choice of the eigenvector pairs is given by the analysis of the eigenvalue multiplicity. In this way, the transform defined in \eqref{sdft-1d} is still the graph transform of a cycle graph. 

Equation \eqref{sdft-1d} shows that the SDFT can be obtained by applying the rotation described by $R(\theta)$ to the output of the standard DFT, that can be easily computed using the FFT.

From a geometrical point of view, \eqref{eq:rot} represents a rotation in the complex plane. Given a real-valued signal $\mathbf{x}\in\R^N$, its DFT coefficients $\hat{\mathbf{x}}$ have the symmetry property 
$\hat{\mathbf{x}}_k=\hat{\mathbf{x}}_{N-k}^*$,
where $1\le k\le \frac{N}{2}-1$ and the ``$*$'' symbol denotes conjugation \cite{lyons2010understanding}. Then, using the rotation in \eqref{eq:rot} we can break this symmetry. For example, if we perform a rotation by $\frac{\pi}{4}$, we can completely separate the real part and the imaginary part. In fact, given $\mathbf{v}^{(k)'}$ and $\mathbf{v}^{(N-k)'}$, which are obtained rotating $\mathbf{v}^{(k)}$ and $\mathbf{v}^{(N-k)}$ by $\frac{\pi}{4}$ as in \eqref{eq:rot}, the new transform coefficients are
\[
\begin{split}
\begin{bmatrix}
\hat{\mathbf{x}}'_k\\
\hat{\mathbf{x}}'_{N-k}
\end{bmatrix}
&=
\begin{bmatrix}
\mathbf{v}^{(k)'}\\
\mathbf{v}^{(N-k)'}
\end{bmatrix}
\mathbf{x}
=
\begin{bmatrix}
\cos\frac{\pi}{4} & \sin\frac{\pi}{4}\\
-\sin\frac{\pi}{4} & \cos\frac{\pi}{4}
\end{bmatrix}
\begin{bmatrix}
\mathbf{v}^{(k)}\\
\mathbf{v}^{(N-k)}
\end{bmatrix}
\mathbf{x}\\
&=
\begin{bmatrix}
\frac{\sqrt{2}}{2} & \frac{\sqrt{2}}{2}\\
-\frac{\sqrt{2}}{2} & \frac{\sqrt{2}}{2}
\end{bmatrix}
\begin{bmatrix}
\hat{\mathbf{x}}_k\\
\hat{\mathbf{x}}_{N-k}
\end{bmatrix}
=
\begin{bmatrix}
\sqrt{2}\mbox{ Re}(\hat{\mathbf{x}}_k)\\
-i\sqrt{2}\mbox{ Im}(\hat{\mathbf{x}}_k)
\end{bmatrix},
\end{split}
\]
where $i=\sqrt{-1}$.
As an example, Fig. \ref{fig:plot1d} shows a plot of a pair of coefficients $\hat{\mathbf{x}}'_k$ and $\hat{\mathbf{x}}'_{N-k}$ as a function of the rotation angle $\theta_k\in [0,2\pi]$. For both coefficients, we can clearly see that the absolute values of the real and imaginary part are inversely proportional. Moreover, when $\theta_k=(2t+1)\frac{\pi}{4}$ with $t=0,1,2,3$ one coefficient is a real value and the other one is a pure imaginary value. 
\begin{figure}[t]
\centering
\includegraphics[width=7.5cm]{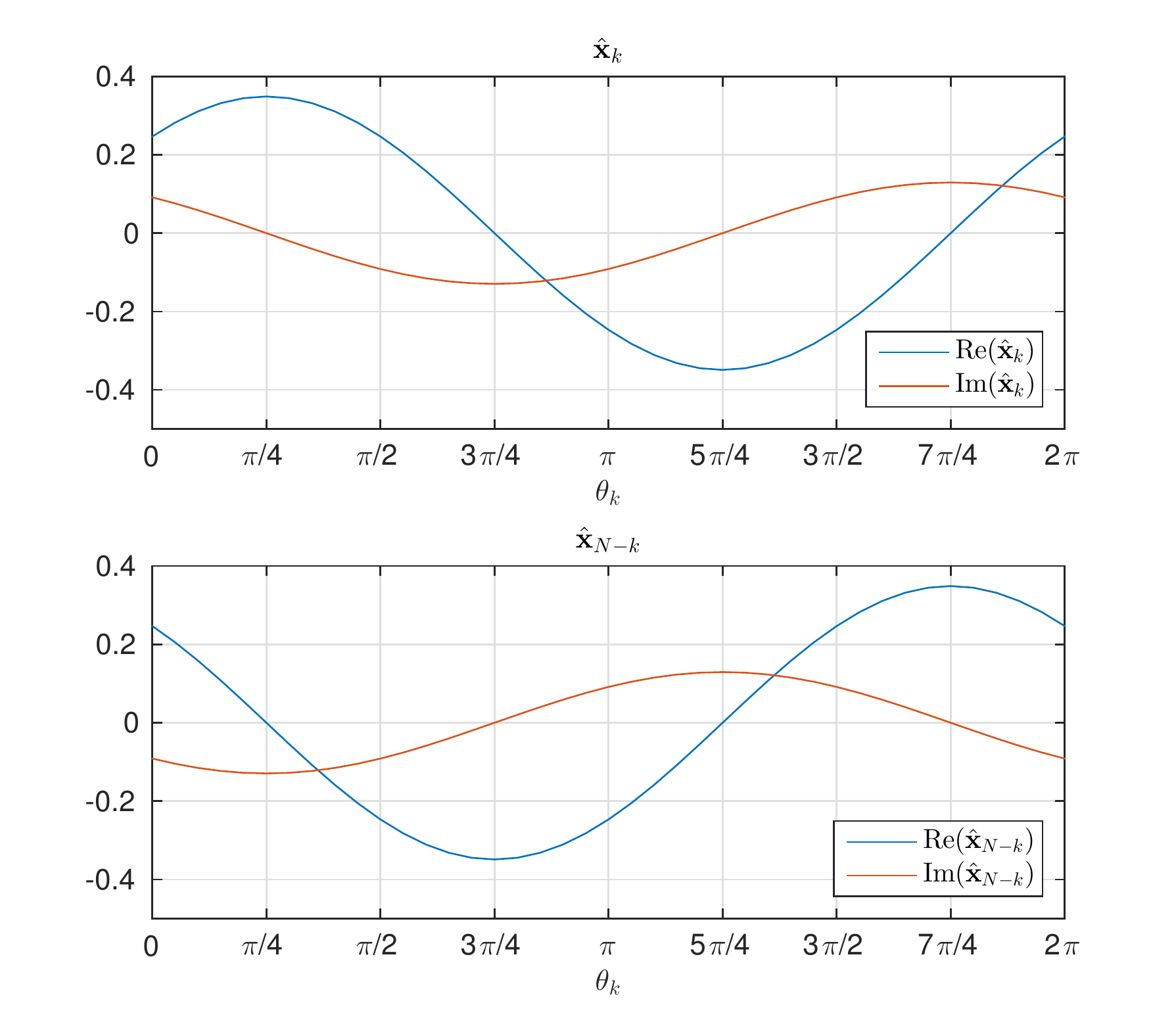}
\vspace{-0.6cm}
\caption{An example of a pair of coefficients $\hat{\mathbf{x}}'_k$ and $\hat{\mathbf{x}}'_{N-k}$ as a function of the rotation angle $\theta\in [0,2\pi]$.}
\label{fig:plot1d}
\vspace{-0.5cm}
\end{figure}
Finally, it is important to underline that the rotation described in \eqref{eq:rot} preserves the total energy of the coefficients in the eigenspace, i.e. 
$|\hat{\mathbf{x}}_k|^2+|\hat{\mathbf{x}}_{N-k}|^2=|\hat{\mathbf{x}}_k'|^2+|\hat{\mathbf{x}}_{N-k}'|^2.$

\subsection{Relationships of the 1D-SDFT to other transforms}
We have already shown that if $\theta=0$ the 1D-SDFT is equal to the 1D-DFT. Instead if $\theta=\frac{\pi}{4}$, it is interesting to show that for $1\le k\le\frac{N}{2}-1$ we have that $\hat{\mathbf{x}}_k=\sqrt{2}\mathbf{x}^{\mbox{cos}}_k$, where $\mathbf{x}^{\mbox{cos}}_k$ is the $k$-th coefficient of the Fourier cosine transform \cite{poularikas2010transforms}. Analogously, for $\frac{N}{2}+1\le k\le N-1$ we have that $\hat{\mathbf{x}}_k=-i\sqrt{2}\mathbf{x}^{\mbox{sin}}_k$, where $\mathbf{x}^{\mbox{sin}}_k$ is the $k$-th coefficient of the Fourier sine transform \cite{poularikas2010transforms}. In turn, the Fourier cosine and sine transform are highly related respectively to the DCT and DST \cite{poularikas2010transforms}.

Moreover, we can also relate the 1D-SDFT with the Hilbert transform \cite{kschischang2006hilbert}. In fact, given a signal $\mathbf{x}$ it can be proved that
\begin {equation}
\label{hilbert}
\mathcal{H}(\mathbf{x})=\mbox{ Im}\left(\tilde{V}\left(\frac{\pi}{4}\right)^HV\left(-\frac{\pi}{4}\right)\mathbf{x}\right),
\end{equation}
where $\mathcal{H}(\mathbf{x})$ is the Hilbert transform of $\mathbf{x}$ and $\tilde{V}\left(\frac{\pi}{4}\right)$ is the SDFT-1D correspondig to the improper rotation \cite{salomon2012computer}
\[
\begin{bmatrix}
\cos\frac{\pi}{4} & \sin\frac{\pi}{4}\\
\sin\frac{\pi}{4} & -\cos\frac{\pi}{4}
\end{bmatrix},
\]
moreover $\mbox{Re}\left(\tilde{V}\left(\frac{\pi}{4}\right)^HV\left(-\frac{\pi}{4}\right)\mathbf{x}\right)=(\mathbf{v}^{(0)^T}\mathbf{x})\mathbf{v}^{(0)}+(\mathbf{v}^{\left(\frac{N}{2}\right)^T}\mathbf{x})\mathbf{v}^{\left(\frac{N}{2}\right)}$.

These relationships are interesting because they may open the way to new generalizations of these transforms.

\section{SDFT - 2D case}
In the two dimensional case, the 2D-DFT of a signal $X\in\mathbb{R}^{N_1\times N_2}$ can be computed as follows
\[
\hat{X}_{kl}=\sum_{n=0}^{N_1-1}\sum_{m=0}^{N_2-1}X_{mn}\e^{-i2\pi\left(\frac{l}{N_1}n+\frac{k}{N_2}m\right)},
\]
in matrix form we can write it as
$\hat{\mathbf{x}}=W\mathbf{x}$,
where $\mathbf{x}\in\mathbb{R}^{N_1 N_2}$ is the vectorized signal $X$ and $W\in\mathbb{C}^{N_1 N_2\times N_1 N_2}$ is the 2D-DFT matrix, which is defined in the following way
\[
W_{ts}=\e^{-i2\pi\left(\frac{l}{N_1}n+\frac{k}{N_2}m\right)}=\rho_l^n\rho_k^m,
\]
where $s=mN_1+n$, $t=kN_1+l$, $0\le l,n\le N_1-1$ and $0\le m,k\le N_2-1$.

We now consider a grid graph with periodic boundary conditions that is called toroidal grid graph $\T_{N_1N_2}$ \cite{park2015many}, where $|\V|=N_1N_2$. An example of a toroidal grid graph is shown in Fig. \ref{fig:graph}. It is known that the toroidal grid graph $\mathcal{T}_{N_1N_2}$ corresponds to the product graph $\mathcal{C}_{N_1}\times\mathcal{C}_{N_2}$, where $\mathcal{C}_{N_i}$ is a cycle of $N_i$ vertices \cite{ruskey2003bent}.

In order to study the spectrum of the toroidal grid graph, we recall the following theorem on the spectrum of the product graph.
\begin{theorem}[Theorem 2.21 in \cite{merris1994laplacian}; \cite{merris1998laplacian}]\label{theo:compute_eigen}
 Let $\G_1$ and $\G_2$ be graphs on $N_1$ and $N_2$ vertices, respectively. Then, the eigenvalues of $L(\G_1\times \G_2)$ are all possible sums of $\lambda_i(\G_1)+\lambda_j(\G_2)$, where $0\le i\le N_1-1$ and $0\le j\le N_2-1$. Moreover, if $\mathbf{v}^{(i)}$ is an eigenvector of $\G_1$ corresponding to $\lambda_i(\G_1)$ and $\mathbf{v}^{(j)}$ an eigenvector of $\G_2$ corresponding to $\lambda_j(\G_2)$, then $\mathbf{v}^{(i)}\otimes \mathbf{v}^{(j)}$ (where $\otimes$ indicates the Kronecker product) is an eigenvector of $\G$ corresponding to $\lambda_i(\G_1)+\lambda_j(\G_2)$.
\end{theorem}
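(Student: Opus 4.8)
The plan is to reduce the statement to the well-known tensor-sum structure of the Laplacian of a Cartesian product graph. First I would recall the definition of $\G=\G_1\times\G_2$: its vertex set is $\V(\G_1)\times\V(\G_2)$, and two vertices $(u,v)$ and $(u',v')$ are adjacent exactly when either $u=u'$ and $\{v,v'\}\in\E(\G_2)$, or $v=v'$ and $\{u,u'\}\in\E(\G_1)$. Ordering the $N_1N_2$ vertices lexicographically, this definition turns into the matrix identities $A(\G)=A(\G_1)\otimes I_{N_2}+I_{N_1}\otimes A(\G_2)$ and $D(\G)=D(\G_1)\otimes I_{N_2}+I_{N_1}\otimes D(\G_2)$, the latter because the degree of $(u,v)$ in $\G$ is the sum of the degree of $u$ in $\G_1$ and of $v$ in $\G_2$. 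Subtracting, I obtain the key identity $L(\G)=L(\G_1)\otimes I_{N_2}+I_{N_1}\otimes L(\G_2)$.

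Next I would invoke the mixed-product rule for the Kronecker product, $(M\otimes N)(\mathbf{a}\otimes\mathbf{b})=(M\mathbf{a})\otimes(N\mathbf{b})$. If $L(\G_1)\mathbf{v}^{(i)}=\lambda_i(\G_1)\mathbf{v}^{(i)}$ and $L(\G_2)\mathbf{v}^{(j)}=\lambda_j(\G_2)\mathbf{v}^{(j)}$, then applying the identity above to $\mathbf{v}^{(i)}\otimes\mathbf{v}^{(j)}$ gives
\[
L(\G)\bigl(\mathbf{v}^{(i)}\otimes\mathbf{v}^{(j)}\bigr)=\bigl(\lambda_i(\G_1)+\lambda_j(\G_2)\bigr)\bigl(\mathbf{v}^{(i)}\otimes\mathbf{v}^{(j)}\bigr),
\]
which already establishes the ``moreover'' part of the theorem.

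To show that these sums exhaust the spectrum, I would use that $L(\G_1)$ and $L(\G_2)$ are real symmetric, hence each admits an orthonormal eigenbasis, $\{\mathbf{v}^{(i)}\}_{i=0}^{N_1-1}$ and $\{\mathbf{v}^{(j)}\}_{j=0}^{N_2-1}$. From $(\mathbf{a}\otimes\mathbf{b})^H(\mathbf{c}\otimes\mathbf{d})=(\mathbf{a}^H\mathbf{c})(\mathbf{b}^H\mathbf{d})$ it follows that the $N_1N_2$ vectors $\mathbf{v}^{(i)}\otimes\mathbf{v}^{(j)}$ are orthonormal, so they form a basis of $\R^{N_1N_2}$. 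Since they are all eigenvectors of $L(\G)$ and there are exactly $N_1N_2=|\V(\G)|$ of them, the multiset $\{\lambda_i(\G_1)+\lambda_j(\G_2):0\le i\le N_1-1,\ 0\le j\le N_2-1\}$ must be the full spectrum of $L(\G)$, counted with multiplicity.

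I expect the only real obstacle to be the bookkeeping in the first step --- carefully verifying that the lexicographic ordering of $\V(\G_1)\times\V(\G_2)$ makes the adjacency and degree matrices of the Cartesian product decompose as the stated Kronecker sums. Once that identity is in hand, the remainder is routine Kronecker-product algebra; alternatively, one may simply cite the identity and its spectral consequences directly from \cite{merris1994laplacian}.
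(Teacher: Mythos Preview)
Your argument is correct and is in fact the standard proof of this result: write $L(\G_1\times\G_2)=L(\G_1)\otimes I_{N_2}+I_{N_1}\otimes L(\G_2)$ via the definition of the Cartesian product, then apply the mixed-product rule to tensor products of eigenvectors, and finally count dimensions using orthonormality of $\{\mathbf{v}^{(i)}\otimes\mathbf{v}^{(j)}\}$ to conclude that these sums exhaust the spectrum.

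However, there is nothing to compare against: the paper does not supply its own proof of this theorem. It is quoted verbatim as Theorem~2.21 of \cite{merris1994laplacian} (see also \cite{merris1998laplacian}) and used as a black box to compute the eigenvectors of the toroidal grid $\mathcal{T}_{NN}=\mathcal{C}_N\times\mathcal{C}_N$. So your proposal is not an alternative to the paper's proof --- it is simply a proof of a cited background result that the paper takes for granted.
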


We now show that there is a strong connection between 2D-DFT and toroidal grid graph.
\begin{theorem}
Let $\mathcal{T}_{NN}$ be a toroidal graph, then the 2D-DFT basis is an eigenbasis of  $L(\mathcal{T}_{NN})$.
\end{theorem}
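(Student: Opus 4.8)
The plan is to reduce the statement to the one‑dimensional case already treated, using the product‑graph structure of the torus together with Theorem~\ref{theo:compute_eigen}. Recall the fact, cited above from \cite{ruskey2003bent}, that $\mathcal{T}_{NN}=\mathcal{C}_N\times\mathcal{C}_N$, so that $L(\mathcal{T}_{NN})=L(\mathcal{C}_N\times\mathcal{C}_N)$. From the 1D discussion we already know that $L(\mathcal{C}_N)$ admits the orthogonal eigenvector family $\{\mathbf{v}^{(k)}\}_{k=0}^{N-1}$, where $\mathbf{v}^{(k)}$ has entries of the form $\rho_k^m$ and the associated eigenvalue is $\lambda_k=2-2\cos\frac{2\pi k}{N}$ as in \eqref{circeig}.

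First I would apply Theorem~\ref{theo:compute_eigen} with $\G_1=\G_2=\mathcal{C}_N$. This immediately yields that for every pair $(k,l)$ with $0\le k,l\le N-1$ the vector $\mathbf{v}^{(k)}\otimes\mathbf{v}^{(l)}$ is an eigenvector of $L(\mathcal{T}_{NN})$ with eigenvalue $\lambda_k+\lambda_l$, producing $N^2$ such vectors. The remaining work is to identify this family with the 2D‑DFT basis.

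Next I would verify this identification by an entrywise comparison. By definition of the Kronecker product, the component of $\mathbf{v}^{(k)}\otimes\mathbf{v}^{(l)}$ in position $s=mN+n$ equals $(\mathbf{v}^{(k)})_m\,(\mathbf{v}^{(l)})_n=\rho_k^m\rho_l^n$, which is exactly the entry $W_{ts}$ with $t=kN+l$ in the definition of the 2D‑DFT matrix $W$ given above. Hence the set $\{\mathbf{v}^{(k)}\otimes\mathbf{v}^{(l)}\}_{k,l}$ coincides, up to the reindexing $(k,l)\mapsto t=kN+l$ and $(m,n)\mapsto s=mN+n$, with the rows of $W$. Finally, that this family is a genuine basis follows from the identity $(\mathbf{a}\otimes\mathbf{b})^H(\mathbf{c}\otimes\mathbf{d})=(\mathbf{a}^H\mathbf{c})(\mathbf{b}^H\mathbf{d})$: the Kronecker products of two orthogonal bases of $\mathbb{C}^N$ give $N^2$ mutually orthogonal nonzero vectors in $\mathbb{C}^{N^2}$, hence a basis. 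Combining this with the eigenvector property from Theorem~\ref{theo:compute_eigen} gives the claim.

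The only delicate point in the argument is purely clerical — keeping the vectorization conventions $s=mN+n$ and $t=kN+l$ used to define $W$ consistent with the ordering induced by the Kronecker product — so no genuine analytic obstacle arises once the indices are aligned. (One may additionally remark, in analogy with the 1D case, that $L(\mathcal{T}_{NN})$ has repeated eigenvalues whenever $\lambda_k+\lambda_l=\lambda_{k'}+\lambda_{l'}$, so the 2D‑DFT basis is again not the unique eigenbasis; this is what later enables the definition of the 2D‑SDFT.)
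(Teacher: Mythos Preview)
Your proof is correct and follows essentially the same route as the paper: both invoke Theorem~\ref{theo:compute_eigen} on $\mathcal{T}_{NN}=\mathcal{C}_N\times\mathcal{C}_N$, form the Kronecker products $\mathbf{v}^{(k)}\otimes\mathbf{v}^{(l)}$, and identify their entries with the rows of $W$ via the same index bookkeeping $t=kN+l$, $s=mN+n$. Your version is slightly more explicit in justifying that the resulting family is a basis (via the Kronecker orthogonality identity), whereas the paper leaves this implicit.
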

\begin{proof}
Let $\mathbf{v}^{(p)}$ and $\mathbf{v}^{(q)}$, where $0\le p,q\le N-1$, be the eigenvectors of $\mathcal{C}_{N}$ corrisponding respectively to the eigenvalues $\lambda_p$ and $\lambda_q$, as defined in Sec. III.   Then, using Theorem \ref{theo:compute_eigen} we can compute the eigenvector $\mathbf{u}^{(p,q)}$ of $\mathcal{T}_{NN}$ corresponding to the eigenvalue $\mu_{p,q}=\lambda_p+\lambda_q$
\[
\mathbf{u}^{(p,q)}=\mathbf{v}^{(p)}\otimes\mathbf{v}^{(q)}=\begin{bmatrix}\mathbf{v}^{(p)}_1\mathbf{v}^{(q)}\\
\mathbf{v}^{(p)}_2\mathbf{v}^{(q)}\\
\vdots\\
\mathbf{v}^{(p)}_{n-1}\mathbf{v}^{(q)}\\
\end{bmatrix}=\begin{bmatrix}\mathbf{v}^{(q)}\\
\rho_p\mathbf{v}^{(q)}\\
\vdots\\
\rho_p^{n-1}\mathbf{v}^{(q)}
\end{bmatrix}=\mathbf{w}^{(k)},
\]
where $k=pN+q$ and $\mathbf{w}^{(k)^T}$ is the $k$-th row of the 2D-DFT matrix $W$. Therefore, the 2D-DFT is an eigenbasis of the Laplacian of $\mathcal{T}_{NN}$ (i.e. $\Phi^H=W$).

\end{proof}
Since $\mu_{p,q}=\lambda_p+\lambda_q=\mu_{q,p}$ and recalling property \eqref{property} for the eigenvalues $\lambda_k$ of $\C_N$, in the spectrum of $L(\T_{NN})$ several repeated eigenvalues are presents:
\begin{itemize}
\item The eigenvalues $\mu_{p,q}$ where $1\le p,q\le \frac{N}{2}-1$ and $p\ne q$ have algebraic multiplicity 8 since $\mu_{p,q}=\mu_{q,p}=\mu_{p,N-q}=\mu_{N-q,p}=\mu_{N-p,q}=\mu_{q,N-p}=\mu_{N-p,N-q}=\mu_{N-q,N-p}$.
\item The eigenvalues $\mu_{p,p}$ where $1\le p\le \frac{N}{2}-1$ have algebraic multiplicity 4 since $\mu_{p,p}=\mu_{p,N-p}=\mu_{N-p,p}=\mu_{N-p,N-p}$. 
\item The eigenvalues $\mu_{p,q}$ where $p=0,\frac{N}{2}$ and $1\le q\le \frac{N}{2}-1$ (or $1\le p\le \frac{N}{2}-1$ and $q=0,\frac{N}{2}$) have algebraic multiplicity 4 because $\mu_{p,q}=\mu_{q,p}=\mu_{p,N-q}=\mu_{N-q,p}$ ($\mu_{p,q}=\mu_{q,p}=\mu_{N-p,q}=\mu_{q,N-p}$). 
\item The eigenvalue $\mu_{0,\frac{N}{2}}=\mu_{\frac{N}{2},0}$ has multiplicity 2.
\item The eigenvalues $\mu_{0,0}$ and $\mu_{\frac{N}{2},\frac{N}{2}}$ are the only ones with algebraic multiplicity 1.
\end{itemize}

Since the Kronecker product is not commutative, the eigenvectors $\mathbf{u}^{(p,q)}$ of $\T_{NN}$ are orthogonal. Then, the geometric multiplicity is equal to the algebraic multiplicity. Therefore, the dimension of the eigenspaces corresponding to the repeated eigenvalues is bigger than one. This proves that the 2D-DFT is not the unique eigenbasis for $L(\T_{NN})$ and, thus, the 2D-DFT is not the unique GFT for $\T_{NN}$.

As shown above, in the spectrum of $L(\T_{NN})$ many eigenvalues with multiplicity greater than 2 are present. Therefore, it may be possible to define rotations in more than two dimensions. However, these rotations may not have a clear geometrical meaning. For this reason, in the following of this section we restrict our study to rotations in two dimensions that exploit the symmetric property $\mu_{p,q}=\mu_{q,p}$. Instead, the rotations that exploit the property $\mu_{p,q}=\mu_{N-q,N-p}$ and $\mu_{N-p,q}=\mu_{p,N-q}$ are analog to the ones shown in the 1D case.

Given any vector pair of the 2D-DFT, $\mathbf{u}^{(p,q)}$ and  $\mathbf{u}^{(q,p)}$ where $p\ne q$, we can obtain a new pair of eigenvectors of $L(\T_{NN})$ by performing the following rotation
\begin{equation}
\label{rot2}
\begin{bmatrix}
\mathbf{u}^{(p,q)'}\\
\mathbf{u}^{(q,p)'}
\end{bmatrix}
=
\begin{bmatrix}
\cos\theta_{p,q} & \sin\theta_{p,q}\\
-\sin\theta_{p,q} & \cos\theta_{p,q}
\end{bmatrix}
\begin{bmatrix}
\mathbf{u}^{(p,q)}\\
\mathbf{u}^{(q,p)}
\end{bmatrix},
\end{equation}
where $\theta_{p,q}$ is an angle in $[0,2\pi]$.
Then, analogously to the 1D case, we can define a new transform matrix $V(\theta)\in\mathbb{C}^{N^2\times N^2}$, called 2D-SDFT, that is obtained by replacing in the 2D-DFT matrix the pairs $\mathbf{u}^{(p,q)}$ and  $\mathbf{u}^{(q,p)}$ with the rotated ones $\mathbf{u}^{(p,q)'}$ and  $\mathbf{u}^{(q,p)'}$. The vector $\theta\in\R^{p}$ contains all the angles used and its length is equal to the number of vector pairs, that is $p=\frac{N(N-1)}{2}$. Similarly to the 1D case, also the 2D-SDFT matrix $V(\theta)$ can be computed as in \eqref{sdft-1d}, where, in this case, $R(\theta)\in\R^{N^2\times N^2}$ is the rotation matrix whose structure is defined so that, for each pair of vectors, it performs the rotation as defined in \eqref{rot2}.

Given a signal $\mathbf{x}\in\R^{N\times N}$, we can compute the SDFT coefficients of $\mathbf{x}$ corresponding to the eigenvectors $\mathbf{u}^{(p,q)'}$ and $\mathbf{u}^{(q,p)'}$ in the following way
\[
\begin{split}
\begin{bmatrix}
\hat{\mathbf{x}}'_{p,q}\\
\hat{\mathbf{x}}'_{q,p}
\end{bmatrix}
&=
\begin{bmatrix}
\mathbf{u}^{(p,q)'}\\
\mathbf{u}^{(q,p)'}
\end{bmatrix}
\mathbf{x}
=
\begin{bmatrix}
\cos\theta_{p,q} & \sin\theta_{p,q}\\
-\sin\theta_{p,q} & \cos\theta_{p,q}
\end{bmatrix}
\begin{bmatrix}
\mathbf{u}^{(p,q)}\\
\mathbf{u}^{(q,p)}
\end{bmatrix}
\mathbf{x}\\
&=
\begin{bmatrix}
\cos\theta_{p,q} & \sin\theta_{p,q}\\
-\sin\theta_{p,q} & \cos\theta_{p,q}
\end{bmatrix}
\begin{bmatrix}
\hat{\mathbf{x}}_{p,q}\\
\hat{\mathbf{x}}_{q,p}
\end{bmatrix}\\
&=\begin{bmatrix}
\cos\theta_{p,q} & \sin\theta_{p,q}\\
-\sin\theta_{p,q} & \cos\theta_{p,q}
\end{bmatrix}
\begin{bmatrix}
\mbox{Re}(\hat{\mathbf{x}}_{p,q})\\
\mbox{Re}(\hat{\mathbf{x}}_{p,q})
\end{bmatrix}+\\
&+i\begin{bmatrix}
\cos\theta_{p,q} & \sin\theta_{p,q}\\
-\sin\theta_{p,q} & \cos\theta_{p,q}
\end{bmatrix}
\begin{bmatrix}
\mbox{Im}(\hat{\mathbf{x}}_{p,q})\\
\mbox{Im}(\hat{\mathbf{x}}_{p,q})
\end{bmatrix}.
\end{split}
\]
Therefore, we can state that, from a geometrical point of view, \eqref{rot2} performs separately a rotation of the real and imaginary part in the 2D Euclidean space.  
Then, by applying \eqref{rot2} the total energy of the real and imaginary part of the coefficient pair remains unchanged, that is
\[
\mbox{Re}(\hat{\mathbf{x}}_{p,q})^2+\mbox{Re}(\hat{\mathbf{x}}_{q,p})^2=\mbox{Re}(\hat{\mathbf{x}}'_{p,q})^2+\mbox{Re}(\hat{\mathbf{x}}'_{q,p})^2,
\]
\[
\mbox{Im}(\hat{\mathbf{x}}_{p,q})^2+\mbox{Im}(\hat{\mathbf{x}}_{q,p})^2=\mbox{Im}(\hat{\mathbf{x}}'_{p,q})^2+\mbox{Im}(\hat{\mathbf{x}}'_{q,p})^2,
\]
but it is possible to unbalance the energy of the real and imaginary part of each coefficient. 
For example, we can compact all the energy of the real part in one coefficient, zeroing out the other one. In fact, given the pair of DFT coefficients $\hat{\mathbf{x}}_{p,q}$ and $\hat{\mathbf{x}}_{q,p}$ we rotate the pair of corresponding DFT vectors $\mathbf{u}^{(p,q)}$ and $\mathbf{u}^{(q,p)}$ as in \eqref{rot2} by an angle defined as follows
\[
\theta_{p,q}=\mbox{arctan }\frac{\mbox{Re}(\hat{\mathbf{x}}_{q,p})}{\mbox{Re}(\hat{\mathbf{x}}_{p,q})}.
\]
Then, we get that $\mbox{Re}(\hat{\mathbf{x}}_{q,p}')=0$ and the energy of the real part of the coefficient pair is conveyed to $\hat{\mathbf{x}}_{p,q}'$, as shown in Fig. \ref{fig:rot_xy}.

\begin{figure}[t]
\centering
\includegraphics[width=3.8cm]{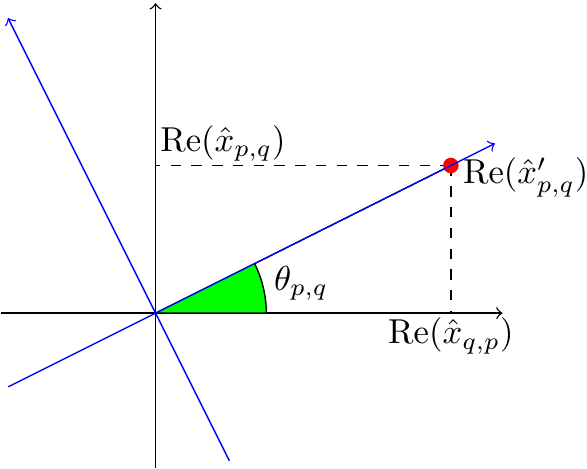}
\vspace{-0.2cm}
\caption{Rotation of the 2D-DFT vector pair $\mathbf{u}^{(p,q)}$ and $\mathbf{u}^{(q,p)}$.}
\label{fig:rot_xy}
\vspace{-0.5cm}
\end{figure}
\vspace{-0.3cm}
\section{Applications of the SDFT}
In this section we discuss possible applications of the SDFT. 

The 1D-SDFT can be useful for signal analysis and processing. For example, it can be used for easily filtering the even/odd component of a signal. In fact, if we rotate the pairs of vectors $\mathbf{v}^{(k)}$ and $\mathbf{v}^{(N-k)}$ by $\frac{\pi}{4}$, we can design a filter that, convolved with the input signal, retains only the first (last) $\frac{N}{2}$ coefficients and outputs the even (odd) signal component, as shown in Fig. \ref{fig:sym} where we obtain as output of the filter the even component of the input signal. We can also easily filter the even or odd component of specific frequencies.  Analogously in 2D, we can perform the same filtering operation by rotating by $\frac{\pi}{4}$ the pairs of vectors $\mathbf{u}^{(p,q)}$ and $\mathbf{u}^{(N-p,N-q)}$ and the pairs $\mathbf{u}^{(p,N-q)}$ and $\mathbf{u}^{(N-p,q)}$. This filtering operation could be useful for signal representation, as in \cite{gnutti2015representation}.
\begin{figure}[t]
\centering
\includegraphics[width=7.3cm]{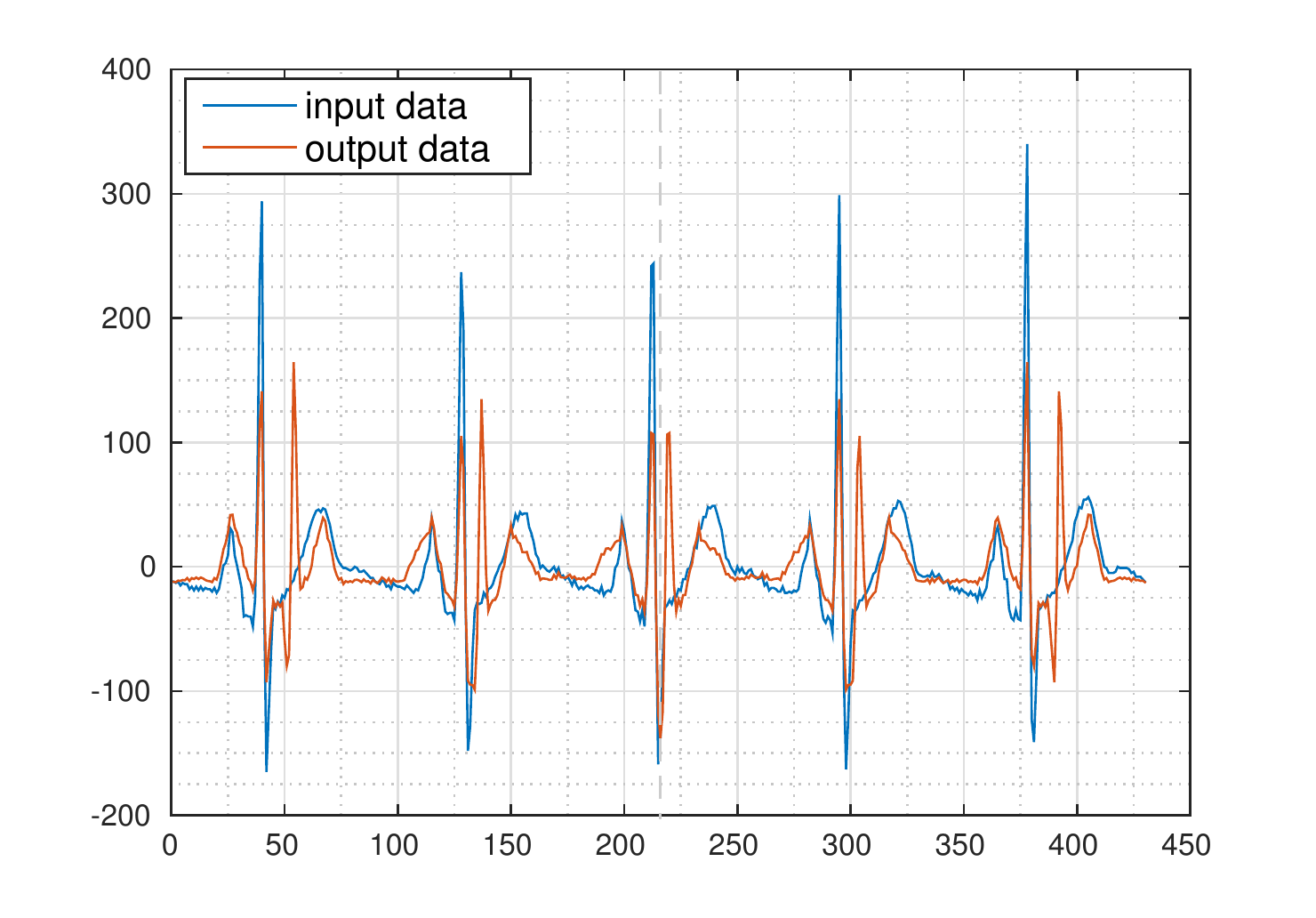}
\vspace{-0.5cm}
\caption{Example of filtering the even component of a real signal.}
\label{fig:sym}
\vspace{-0.5cm}
\end{figure}

Moreover, in \eqref{hilbert} we have already shown that the SDFT-1D may be used to perform the Hilbert transform, this could be useful for computing the local phase and amplitude, that is used in many applications, such as edge detection \cite{kovesi1999image} and image feature extraction \cite{carneiro2002phase}.

The 1D-SDFT can be applied also in multimedia encryption problems. In this field, several works use parametrized versions of common transforms for security purposes \cite{pande2013secure, unnikrishnan2000optical}. Since the SDFT is a parametrized version of the DFT, one can use the parameter $\theta\in\mathbb{R}^p$ as a secret key. More specifically, given a signal $\mathbf{x}\in\mathbb{R}^p$ we can obtain $\hat{\mathbf{x}}=V(\theta)\mathbf{x}$ and then consider the first $\frac{N}{2}$ components of $\hat{\mathbf{x}}$ as the encrypted signal. Given $\theta$, it is possible to reconstruct $\hat{\mathbf{x}}$ and then we can obtain the original signal $\mathbf{x}$ by applying the inverse SDFT.  We can also consider the SDFT as a keyed transform basis that can be used for compressed sensing-based cryptography \cite{bianchi2016analysis,zhang2014bi}.

The applications presented in this section are just a few examples of possible applications of the SDFT, but the SDFT could be of interest for a wide range of fields, such as array signal processing, phase retrieval and magnetic resonance imaging.
\vspace{-0.5cm}
\section{Conclusion}
The proposed SDFT is an important generalization of the classical DFT and it may be of interest in a wide range of application fields, such as filtering, signal analysis and multimedia encryption.


\begin{thebibliography}{10}
\providecommand{\url}[1]{#1}
\csname url@samestyle\endcsname
\providecommand{\newblock}{\relax}
\providecommand{\bibinfo}[2]{#2}
\providecommand{\BIBentrySTDinterwordspacing}{\spaceskip=0pt\relax}
\providecommand{\BIBentryALTinterwordstretchfactor}{4}
\providecommand{\BIBentryALTinterwordspacing}{\spaceskip=\fontdimen2\font plus
\BIBentryALTinterwordstretchfactor\fontdimen3\font minus
  \fontdimen4\font\relax}
\providecommand{\BIBforeignlanguage}[2]{{%
\expandafter\ifx\csname l@#1\endcsname\relax
\typeout{** WARNING: IEEEtran.bst: No hyphenation pattern has been}%
\typeout{** loaded for the language `#1'. Using the pattern for}%
\typeout{** the default language instead.}%
\else
\language=\csname l@#1\endcsname
\fi
#2}}
\providecommand{\BIBdecl}{\relax}
\BIBdecl

\bibitem{zeng2008directional}
B.~Zeng and J.~Fu, ``Directional discrete cosine transforms - a new framework
  for image coding,'' \emph{IEEE Trans. Circuits Syst. Video Technol.},
  vol.~18, no.~3, pp. 305--313, 2008.

\bibitem{fracastoro2015steerable}
G.~Fracastoro and E.~Magli, ``Steerable discrete cosine transform,'' in
  \emph{Proc. IEEE International Workshop on Multimedia Signal Processing, 2015
  (MMSP)}.\hskip 1em plus 0.5em minus 0.4em\relax IEEE, 2015, pp. 1--6.

\bibitem{alshina2011rotational}
E.~Alshina, A.~Alshin, and F.~C. Fernandes, ``Rotational transform for image
  and video compression,'' in \emph{Proc. IEEE International Conference on
  Image Processing}, 2011, pp. 3689--3692.

\bibitem{candes2000curvelets}
E.~Candes and D.~Donoho, ``Curvelets: A surprisingly effective nonadaptive
  representation for objects with edges,'' DTIC Document, Tech. Rep., 2000.

\bibitem{le2005sparse}
E.~L. Pennec and S.~Mallat, ``Sparse geometric image representations with
  bandelets,'' \emph{Image Processing, IEEE Transactions on}, vol.~14, no.~4,
  pp. 423--438, 2005.

\bibitem{do2005contourlet}
M.~N. Do and M.~Vetterli, ``The contourlet transform: an efficient directional
  multiresolution image representation,'' \emph{IEEE Transactions on image
  processing}, vol.~14, no.~12, pp. 2091--2106, 2005.

\bibitem{lyons2010understanding}
R.~G. Lyons, \emph{Understanding digital signal processing}.\hskip 1em plus
  0.5em minus 0.4em\relax Pearson Education, 2004.

\bibitem{allen1977unified}
J.~B. Allen and L.~R. Rabiner, ``A unified approach to short-time {F}ourier
  analysis and synthesis,'' \emph{Proceedings of the IEEE}, vol.~65, no.~11,
  pp. 1558--1564, 1977.

\bibitem{almeida1994fractional}
L.~B. Almeida, ``The fractional fourier transform and time-frequency
  representations,'' \emph{IEEE Transactions on signal processing}, vol.~42,
  no.~11, pp. 3084--3091, 1994.

\bibitem{almeida1993introduction}
------, ``An introduction to the angular {F}ourier transform,'' in \emph{IEEE
  International Conference on Acoustics, Speech, and Signal Processing
  (ICASSP)}, vol.~3.\hskip 1em plus 0.5em minus 0.4em\relax IEEE, 1993, pp.
  257--260.

\bibitem{santhanam1996discrete}
B.~Santhanam and J.~H. McClellan, ``The discrete rotational {F}ourier
  transform,'' \emph{IEEE Transactions on Signal Processing}, vol.~44, no.~4,
  pp. 994--998, 1996.

\bibitem{hammond2011wavelets}
D.~K. Hammond, P.~Vandergheynst, and R.~Gribonval, ``Wavelets on graphs via
  spectral graph theory,'' \emph{Applied and Computational Harmonic Analysis},
  vol.~30, no.~2, pp. 129--150, 2011.

\bibitem{shuman2013emerging}
D.~Shuman, S.~Narang, P.~Frossard, A.~Ortega, and P.~Vandergheynst, ``The
  emerging field of signal processing on graphs: extending high-dimensional
  data analysis to networks and other irregular domains,'' \emph{Signal
  Processing Magazine, IEEE}, vol.~30, no.~3, pp. 83--98, 2013.

\bibitem{strang1999discrete}
G.~Strang, ``The discrete cosine transform,'' \emph{SIAM review}, vol.~41,
  no.~1, pp. 135--147, 1999.

\bibitem{pande2013secure}
A.~Pande and J.~Zambreno, ``The secure wavelet transform,'' in \emph{Embedded
  Multimedia Security Systems}.\hskip 1em plus 0.5em minus 0.4em\relax
  Springer, 2013, pp. 67--89.

\bibitem{unnikrishnan2000optical}
G.~Unnikrishnan, J.~Joseph, and K.~Singh, ``Optical encryption by double-random
  phase encoding in the fractional fourier domain,'' \emph{Optics letters},
  vol.~25, no.~12, pp. 887--889, 2000.

\bibitem{grady2010discrete}
L.~J. Grady and J.~Polimeni, \emph{Discrete calculus: Applied analysis on
  graphs for computational science}.\hskip 1em plus 0.5em minus 0.4em\relax
  Springer Science \& Business Media, 2010.

\bibitem{ekambaram2013multiresolution}
V.~N. Ekambaram, G.~C. Fanti, B.~Ayazifar, and K.~Ramchandran,
  ``Multiresolution graph signal processing via circulant structures,'' in
  \emph{Proc. IEEE Digital Signal Processing and Signal Processing Education
  Meeting (DSP/SPE)}, 2013, pp. 112--117.

\bibitem{ekambaram2013circulant}
------, ``Circulant structures and graph signal processing,'' in \emph{Proc.
  IEEE International Conference on Image Processing (ICIP)}, 2013, pp.
  834--838.

\bibitem{tee2005eigenvectors}
G.~J. Tee, ``Eigenvectors of block circulant and alternating circulant
  matrices,'' \emph{New Zealand Journal of Mathematics}, vol.~36, pp. 195--211,
  2007.

\bibitem{poularikas2010transforms}
A.~D. Poularikas, \emph{Transforms and applications handbook}.\hskip 1em plus
  0.5em minus 0.4em\relax CRC press, 2010.

\bibitem{kschischang2006hilbert}
F.~R. Kschischang, ``The hilbert transform,'' \emph{University of Toronto},
  2006.

\bibitem{salomon2012computer}
D.~Salomon, \emph{Computer graphics and geometric modeling}.\hskip 1em plus
  0.5em minus 0.4em\relax Springer Science \& Business Media, 2012.

\bibitem{park2015many}
J.~H. Park and I.~Ihm, ``Many-to-many two-disjoint path covers in cylindrical
  and toroidal grids,'' \emph{Discrete Applied Mathematics}, vol. 185, pp.
  168--191, 2015.

\bibitem{ruskey2003bent}
F.~Ruskey and J.~Sawada, ``Bent hamilton cycles in d-dimensional grid graphs,''
  \emph{Electronic Journal of Combinatorics}, vol.~10, no. 1 R, 2003.

\bibitem{merris1994laplacian}
R.~Merris, ``Laplacian matrices of graphs: a survey,'' \emph{Linear algebra and
  its applications}, vol. 197, pp. 143--176, 1994.

\bibitem{merris1998laplacian}
------, ``Laplacian graph eigenvectors,'' \emph{Linear algebra and its
  applications}, vol. 278, no.~1, pp. 221--236, 1998.

\bibitem{gnutti2015representation}
A.~Gnutti, F.~Guerrini, and R.~Leonardi, ``Representation of signals by local
  symmetry decomposition,'' in \emph{European Signal Processing Conference
  (EUSIPCO)}, 2015, pp. 983--987.

\bibitem{kovesi1999image}
P.~Kovesi, ``Image features from phase congruency,'' \emph{Videre: Journal of
  computer vision research}, vol.~1, no.~3, pp. 1--26, 1999.

\bibitem{carneiro2002phase}
G.~Carneiro and A.~D. Jepson, ``Phase-based local features,'' in \emph{European
  Conference on Computer Vision}.\hskip 1em plus 0.5em minus 0.4em\relax
  Springer, 2002, pp. 282--296.

\bibitem{bianchi2016analysis}
T.~Bianchi, V.~Bioglio, and E.~Magli, ``Analysis of one-time random projections
  for privacy preserving compressed sensing,'' \emph{IEEE Transactions on
  Information Forensics and Security}, vol.~11, no.~2, pp. 313--327, 2016.

\bibitem{zhang2014bi}
L.~Y. Zhang, K.-W. Wong, Y.~Zhang, and J.~Zhou, ``Bi-level protected
  compressive sampling,'' \emph{arXiv preprint arXiv:1406.1725}, 2014.

\end{thebibliography}
\end{document}